
\documentclass[letterpaper, 10 pt, conference]{IEEEconf}  

\IEEEoverridecommandlockouts                              
\overrideIEEEmargins

\usepackage{graphics,graphicx,psfrag,array,eepic}
\usepackage{multirow}
\usepackage{hhline}
\usepackage{mathtools}
\usepackage[strict]{changepage}
\usepackage{epsfig}
\usepackage{amsmath}
\usepackage{amssymb}

\usepackage{float}
\restylefloat{table}

\usepackage{multicol}
\usepackage{color}
\usepackage{balance}
\usepackage{cite}
\usepackage{booktabs} 
\usepackage{caption,booktabs}
\captionsetup{justification = centering}
\usepackage[tableposition=top]{caption}
\usepackage{subcaption}

\usepackage{lipsum}
\usepackage{comment}
\usepackage{amssymb}
\usepackage{graphicx}

\usepackage{comment}
\usepackage{bm}

\usepackage{algorithm}
\usepackage[noend]{algpseudocode}

\usepackage{mathtools, cuted}
\usepackage[thinc]{esdiff}

\usepackage{psfrag}
\usepackage{array}
\usepackage{latexsym,theorem}
\usepackage{amsfonts,amssymb,amsbsy}
\usepackage{tikz}
\usetikzlibrary{decorations.shapes,shapes.geometric,shadows,arrows,automata,positioning,calendar,mindmap,backgrounds,scopes,chains,er,3d,matrix,fit}
\usepackage{tikz-cd}
\usepackage{pgfplots}
\usetikzlibrary{intersections, pgfplots.fillbetween}

\DeclareMathOperator*{\argmin}{argmin}

\newcommand{\R}{{\rm I\!R}}

\newcommand{\N}{{\rm I\!N}}


%
\newcommand{\cO}{\mathcal{O}}

\newcommand{\cU}{\mathcal{U}}
\newcommand{\cX}{\mathcal{X}}

\newcommand{\cE}{\mathcal{E}}

\newcommand{\cT}{\mathcal{T}}
\newcommand{\cK}{\mathcal{K}}

\newcommand{\cV}{\mathcal{V}}

\newcommand{\cD}{\mathcal{D}}
\newcommand{\cW}{\mathcal{W}}
\newcommand{\cY}{\mathcal{Y}}
\newcommand{\cG}{\mathcal{G}}

\newcommand{\diag}{\operatorname{diag}}

\newtheorem{lemma}{Lemma}
\newtheorem{definition}{Definition}
\newtheorem{theorem}{Theorem}
\newtheorem{proposition}{Proposition}

\newtheorem{assumption}{Assumption}
\newtheorem{remark}{Remark}

\usepackage[british]{babel}
\usepackage{cite}
\usepackage{amsmath,amssymb,amsfonts}
\usepackage{graphicx}
\usepackage{textcomp}
\usepackage{mathrsfs}
\usepackage{url}
\usepackage{bm}
\usepackage{nicefrac}
\allowdisplaybreaks
\usepackage{alphalph,etoolbox}
\usepackage{mathtools,subcaption,float}
\captionsetup{
  justification=justified,
  singlelinecheck=false
}

\usepackage{pgfplots}
\usepgfplotslibrary{colorbrewer}
\pgfplotsset{compat = 1.15} 
\usetikzlibrary{pgfplots.statistics, pgfplots.colorbrewer} 
\usepackage{pgfplotstable}


\title{\LARGE \bf
Probabilistic Tube-based Control Synthesis of Stochastic Multi-Agent Systems under Signal Temporal Logic
}


\author{Eleftherios E. Vlahakis$^1$, Lars Lindemann$^2$, Pantelis Sopasakis$^3$ and Dimos V. Dimarogonas$^1$
\thanks{This work was supported by the Swedish
Research Council (VR), the Knut \& Alice Wallenberg Foundation (KAW), the Horizon Europe Grant SymAware and the ERC Consolidator Grant LEAFHOUND. $^1$Division of Decision and Control Systems, School of Electrical Engineering and Computer Science, KTH Royal Institute of Technology, 10044, Stockholm, Sweden. Email: {\tt\small\{vlahakis,dimos\}@kth.se}. $^2$Thomas Lord Department of Computer Science, Viterbi School of Engineering, University of Southern California,  Los Angeles, 90089, CA, USA. Email: {\tt\small llindema@usc.edu}. $^3$School of Electronics, Electrical Engineering and Computer Science, Queen's University Belfast, Northern Ireland, UK. Email: {\tt\small p.sopasakis@qub.ac.uk}
}%
}

\begin{document}

\maketitle
\thispagestyle{empty}
\pagestyle{empty}

\begin{abstract}
We consider the control design of stochastic discrete-time linear multi-agent systems (MASs) under a global signal temporal logic (STL) specification to be satisfied at a predefined probability. By decomposing the dynamics into deterministic and error components, we construct a probabilistic 
reachable tube (PRT) as the Cartesian product of reachable sets of the individual error systems driven by disturbances lying in confidence regions (CRs) with a fixed probability. By bounding the PRT probability with the specification probability, we tighten all state constraints induced by the STL specification by solving tractable optimization problems over segments of the PRT, and relax the underlying stochastic problem with a deterministic one. 
This approach reduces conservatism compared to tightening guided by the STL structure. Additionally, we propose a recursively feasible algorithm to attack the resulting problem by decomposing it into agent-level subproblems, which are solved iteratively according to a scheduling policy. We demonstrate our method on a ten-agent system, where existing approaches are impractical.

\end{abstract}


\section{Introduction}
\label{sec:introduction}
Multi-agent systems (MASs) can be found in many applications, such as robotics, autonomous vehicles, 
and cyber-physical systems. 
When these systems are stochastic, the formal specification of system properties can be formulated probabilistically. 
As the complexity in control synthesis from temporal logic under uncertainty grows with the dimensionality 
of the overall system, existing approaches typically focus on single-agent \cite{FarahaniTAC2019} or non-stochastic \cite{Liu2017,Buyukkocak2021} systems. 

In this paper, we focus on signal temporal logic (STL) \cite{MalerSTL2004} to formally 
formulate and verify specifications for a wide range of MASs. 
STL employs predicates coupled with Boolean and temporal operators, 
allowing precise specification of complex spatio-temporal properties in a dynamical system. In a deterministic setting, it is possible to design sound and complete algorithms that guarantee STL satisfaction \cite{MurrayCDC2014}, based on the quantitative semantics of STL \cite{Donze2010}. Here, we consider stochastic MASs and a stochastic optimal control problem, where the goal is to satisfy a multi-agent STL specification with a predefined probability.

To address stochasticity in the STL framework, 
the works in \cite{Safaoui2020,LindemannCDC2020,LindemannTAC2022} propose risk constraints 
over predicates while preserving Boolean and temporal operators. 
Probabilistic STL in \cite{Sadigh2016} allows one to express uncertainty by incorporating 
random variables into predicates, while \cite{Sadigh2018} introduces chance-constrained 
temporal logic for modeling uncertainty. 
Similar approaches are found in \cite{Li2017,Kyriakis2019}. 
Top-down approaches imposing chance constraints on the entire specification are explored 
in \cite{FarahaniTAC2019, ScherHSCC2022, Scher2022}. 
Although important, these works focus on low-dimensional systems and 
lack guidance on extending to MASs \cite{Sun2022}. A recent extension of \cite{FarahaniTAC2019} to stochastic MASs under STL \cite{Yang2023} considers only a single joint task per agent and bounded distributions.

Here, we solve a stochastic optimal control problem of a 
discrete-time linear MAS subject to additive stochastic perturbations and a global STL 
specification permitting multiple individual and joint tasks per agent. First, we decompose the multi-agent dynamics into a deterministic system and an error closed-loop stochastic system, for which we construct a probabilistic reachable tube (PRT) \cite{HewingECC2018} as the Cartesian product of reachable sets of individual error systems. These are driven by stochastic disturbances lying in confidence regions (CRs) with a fixed probability. By assuming independence among individual disturbances, we show that the PRT probability can be controlled by the product of probabilities selected for each individual CR and a union-bound argument applied over time. Thus, by lower bounding the PRT probability by the specification probability, we can tighten all state constraints induced by the STL specification by solving tractable optimization problems over segments of the PRT. For multi-agent STL specifications, this is a less conservative alternative to tightening approaches relying on the STL structure \cite{FarahaniTAC2019,Yang2023}. 
An attainable feasible solution to the resulting deterministic problem can then be used to 
synthesize multi-agent trajectories that satisfy the STL specification with the desired 
confidence level. To the best of the authors' knowledge, this work is the first to address 
stochastic MASs under STL utilizing PRTs. Subsequently, to enhance scalability, we decompose the resulting deterministic problem into agent-level subproblems, which are solved iteratively according to a scheduling policy. We show that this iterative procedure is recursively feasible, ensures satisfaction of local tasks, and guarantees nondecreasing 
robustness for joint tasks.

The remainder of the paper is organized as follows. Preliminaries and the control problem 
setup are in Sec. \ref{sec:Prob_setup}. The construction of PRTs, the constraint tightening 
and the distributed control synthesis, are in Sec. \ref{sec:main_results}. 
An illustrative numerical example is in Sec. \ref{sec:example}, whereas concluding remarks 
are discussed in Sec. \ref{sec:concl}.

\section{Problem setup}\label{sec:Prob_setup}

\subsection{Notation and Preliminaries}

The sets of real numbers and nonnegative integers are $\R$ and $\N$, 
respectively. Let $N\in \N$. Then, $\N_{[0,N]}=\{0,1,\ldots,N\}$.  
Let $x_{1},\ldots,x_{n}$ be vectors. 
Then, $x=(x_{1},\ldots,x_{n}) = [x_{1}^\intercal  \;\cdots\;x_{n}^\intercal  ]^\intercal$. 
We denote by $\bm{x}(a:b)=(x(a),\ldots,x(b))$ an aggregate vector consisting of $x(t)$, 
$t\in \N_{[a,b]}$, representing a trajectory. 
When it is clear from the context, we write $\bm{x}(t)$, omitting the endpoint. 
When $x(t)$, $t\in\N_{[a,b]}$, are random vectors, $\bm{x}(a:b)=(x(a),\ldots,x(b))$ 
is a random process. 
Let $x_i(t)$, for $t\in\N_{[0,N]}$ and $i\in\N_{[1,M]}$. 
Then, $\bm{x}(0:N)=(x(0),\ldots,x(N))$ denotes an aggregate trajectory when 
$x(t)=(x_1(t),\ldots,x_M(t))$, $t\in\N_{[0,N]}$. 
The remainder of the division of $a$ by $b$ is $\mathrm{mod}(a,b)$. 
A random variable (vector) $w$ following a distribution $\cD_w$ is denoted as 
$w \sim\cD_w$, the support of $\cD_w$ is $\mathrm{supp}(\cD_w)$, the expected value of 
$w$ is $\mathbb{E}(w)$, and the variance (covariance matrix) of $w$ is $\mathrm{Var}(w)$ 
($\mathrm{Cov}(w)$). 
The probability of event $Y$ is $\mathrm{Pr}\{Y\}$. 
The cardinality of a set $\cV$ is $|\cV|$. 
The Minkowski sum and the Pontryagin set difference of 
$S_1\subseteq \R^n$ and $S_2\subseteq \R^n$ are 
$S_1\oplus S_2=\{s_1+s_2 \;|\; s_1 \in S_1,\; s_2 \in S_2\}$ 
and $S_1\ominus S_2 = \{s_1\in S_1\;|\;s_1  +s_2 \in S_1, \forall s_2 \in S_2 \}$, respectively.

    

\begin{lemma}[Distributivity of Minkowski sum]\label{lemma:Minkowski_sum}
    Let $\cX_i,\cY_i$ $\subseteq \R^{n_i}$, $i\in\N_{[1,M]}$. Then, $(\cX_1\times \cdots \times \cX_M) \oplus (\cY_1\times \cdots \times \cY_M)=(\cX_1\oplus \cY_1)\times \cdots\times (\cX_M\oplus \cY_M)$.
\end{lemma}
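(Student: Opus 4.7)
The plan is to establish the identity by proving the two set inclusions, using only the definitions of Cartesian product and Minkowski sum. Both directions hinge on the fact that vector addition on $\R^{n_1+\cdots+n_M}$ acts componentwise when elements are written as tuples $(x_1,\ldots,x_M)$ with $x_i\in\R^{n_i}$, which is precisely how the paper's notation $x=(x_1,\ldots,x_M)$ is set up.

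For the inclusion $\subseteq$, I would pick an arbitrary $z$ in the left-hand side. By definition of the Minkowski sum there exist $x\in \cX_1\times\cdots\times\cX_M$ and $y\in\cY_1\times\cdots\times\cY_M$ with $z=x+y$. Writing $x=(x_1,\ldots,x_M)$ and $y=(y_1,\ldots,y_M)$ and invoking componentwise addition gives $z=(x_1+y_1,\ldots,x_M+y_M)$ with $x_i\in\cX_i$ and $y_i\in\cY_i$ for each $i\in\N_{[1,M]}$. Hence $x_i+y_i\in\cX_i\oplus\cY_i$, so $z$ belongs to the right-hand side.

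For the reverse inclusion $\supseteq$, starting from $z=(z_1,\ldots,z_M)$ with $z_i\in\cX_i\oplus\cY_i$, I select, for each $i$, a decomposition $z_i=x_i+y_i$ with $x_i\in\cX_i$ and $y_i\in\cY_i$. Assembling the pieces yields $x=(x_1,\ldots,x_M)\in\cX_1\times\cdots\times\cX_M$ and $y=(y_1,\ldots,y_M)\in\cY_1\times\cdots\times\cY_M$, and the identity $z=x+y$ places $z$ in the Minkowski sum on the left.

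No step here is a genuine obstacle; the argument is essentially index bookkeeping, which is why this distributivity is usually quoted without proof. The only mild point worth mentioning is that the second direction uses a finite choice of decompositions $(x_i,y_i)$, which is immediate since $M$ is a fixed natural number. I would end by remarking that this componentwise structure is exactly what later justifies building the PRT as a Cartesian product of agent-level reachable sets.
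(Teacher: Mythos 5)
Your proof is correct and follows essentially the same route as the paper's: unpacking elements of the Minkowski sum of the Cartesian products and using componentwise addition of tuples, which the paper writes as a single chain of set equalities rather than two separate inclusions. No substantive difference.
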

\begin{proof}
It holds that $(\cX_1\times \ldots \times \cX_M) \oplus (\cY_1\times \ldots \times \cY_M)=\{(x_1,\ldots,x_M)+(y_1,\ldots\,y_M)\;|\;\forall x_i\in\cX_i,\; i\in\N_{[1,M]}, \mathrm{ and }\; \forall y_i\in\cY_i,\; i\in\N_{[1,M]}\} = \{(x_1+y_1,\ldots,x_M+y_M)\;|\;\forall x_i\in\cX_i, y_i\in\cY_i,\; i\in\N_{[1,M]}\} = (\cX_1\oplus \cY_1)\times \ldots\times (\cX_M\oplus \cY_M)$.
\end{proof}

We consider STL formulas with standard syntax
\begin{equation}\label{eq:STL_syntax}
    \varphi 
    {}\coloneqq{} 
    \top 
    {}\mid{}
    \pi
    {}\mid{}
    \lnot \phi  
    {}\mid{}
    \phi_1 \wedge \phi_2 
    {}\mid{}
    \phi_1 U_{[t_1,t_2]}\phi_2,
\end{equation}
where $\pi:=(\mu(x)\geq 0)$ is a predicate, $\mu(x):=a^\intercal  x+b$ is an affine predicate function, 
with $a\in\R^{n_x}$, $x\in\R^{n_x}$, and $b\in\R$, and $\phi$, $\phi_1$, and $\phi_2$ are STL formulas, 
which are built recursively using predicates $\pi$, logical operators $\neg$ and $\wedge$, and the 
\textit{until} temporal operator $U$, with $[t_1,t_2]\equiv \N_{[t_1,t_2]}$. 
We omit $\lor$ (\textit{or}), $\lozenge$ (\textit{eventually}) and $\square$ (\textit{always}) 
operators from \eqref{eq:STL_syntax} and the sequel, as these may be defined by \eqref{eq:STL_syntax}, 
e.g., $\phi_1\lor \phi_2 = \neg (\neg \phi_1 \wedge \neg \phi_2)$, 
$\lozenge_{[t_1,t_2]}\phi = \top U_{[t_1,t_2]}\phi$, 
and $\square_{[t_1,t_2]}\phi = \lnot \lozenge_{[t_1,t_2]}\lnot \phi$. 

Let $\pi$ be a predicate and $\phi$ an STL formula. We write $\pi \in \phi$ to indicate that $\pi$ 
is part of the formula $\phi$. We denote by $\bm{x}(t) \models \phi$, $t\in\N$, the satisfaction 
of $\phi$, verified over $\bm{x}(t)=(x(t),x(t+1),\ldots)$. 
The validity of a formula $\phi$ can be verified using Boolean semantics:
$\bm{x}(t) \models \pi \Leftrightarrow \mu(x(t)) \geq 0$, 
$\bm{x}(t) \models \lnot \phi \Leftrightarrow \lnot(\bm{x}(t) \models \phi)$, 
$\bm{x}(t) \models \phi_1 \land \phi_2 \Leftrightarrow \bm{x}(t) \models \phi_1 \land \bm{x}(t) \models \phi_2$,  
$\bm{x}(t) \models \phi_1 U_{[a,b]}\phi_2 \Leftrightarrow \exists \tau\in  t\oplus \N_{[a,b]}$, 
s.t. $\bm{x}(\tau) \models \phi_2 \land \forall \tau'\in \N_{[t,\tau]}, \bm{x}(\tau') \models \phi_1$. Based on the Boolean semantics, the horizon of a formula is recursively defined as \cite{MalerSTL2004}: $N^\pi = 0$, $N^{\lnot\phi} = N^{\phi}$, 
$N^{\phi_1\land \phi_2} = \max(N^{\phi_1}, N^{\phi_2})$, 
$N^{\phi_1\;U_{[a,b]}\phi_2} = b+ \max(N^{\phi_1}, N^{\phi_2})$.

STL is endowed with quantitative semantics \cite{Donze2010}: A scalar-valued function $\rho^\phi: \R^n\times\cdots\times \R^n \to \R$ 
of a signal indicates how robustly a signal $\bm{x}(t)$ satisfies a formula $\phi$. 
The robustness function is defined recursively as follows: $\rho^\pi(\bm{x}(t)) = \mu(x(t))$, $\rho^{\lnot \phi}(\bm{x}(t)) = -\rho^{\phi}(\bm{x}(t))$, $\rho^{\phi_1 \land \phi_2}(\bm{x}(t)) = \min(\rho^{\phi_1}(\bm{x}(t)),\rho^{\phi_2}(\bm{x}(t)))$, and  $\rho^{\phi_1 U_{[a,b]}\phi_2}(\bm{x}(t)) = \max_{\tau\in t\oplus \N_{[a,b]}}\left( \min(Y_1(\tau),Y_2(\tau')) \right)$, with $Y_1(\tau)=\rho^{\phi_1}(\bm{x}(\tau))$, $Y_2(\tau')=\min_{\tau'\in\N_{[t,\tau]}}\rho^{\phi_2}(\bm{x}(\tau'))$, $\pi$ being a predicate, and $\phi$, $\phi_1$, and $\phi_2$ being STL formulas.

\begin{definition}\label{def:cliques}
    Let $\cG = (\cV,\cE)$ be an undirected graph containing no self-loops, with node set $\cV$, cardinality $M=|\cV|$, and edge set $\cE$. Let also $\cV'\subseteq \cV$, with $|\cV'|>1$, and define $\cE_{\cV'}\subseteq \cE$ as the set of edges attached to nodes $\cV'$. Then, $\cG'=(\cV',\cE_{\cV'})$ is a clique \cite{Orlin1977}, i.e., a complete subgraph of $\cG$, if $\cE_{\cV'}$ contains all possible edges between nodes $\cV'$. The set of cliques of $\cG$ is defined as $\cK = \{\nu\subseteq \cV\mid (\nu,\cE_{\nu}) \text{ is a complete subgraph of }\cG \}$.
\end{definition}

Consider a graph $\cG = (\cV,\cE)$ with clique set $\cK$, a clique $\nu\in\cK$, with $\nu=(i_1,\ldots,i_{|\nu|})$, and vectors $x_{i_j}(t)$, $j\in\N_{[1,|\nu|]}$, with $t\in\N$, 
Then, $x_\nu(t)=(x_{i_1}(t),\ldots,x_{i_{|\nu|}}(t))$ 
is an aggregate vector. 
We denote by $\bm{x}_\nu(t)\models \phi_\nu$ the validity of an STL formula defined over the aggregate trajectory $\bm{x}_\nu(t)=(x_\nu(t),x_\nu(t+1),\ldots)$. If $\pi_\nu\in \phi_\nu$, $\pi_\nu:=(\mu_\nu(x_\nu)\geq 0)$, where $\mu_\nu(x_\nu)$ is an affine predicate function of $x_\nu$, with $x_\nu=(x_{i_1},\ldots,x_{i_{|\nu|}})$. 

\subsection{Multi-agent system}\label{sec:MAS}

\subsubsection{Dynamics} We consider $M$ agents with dynamics
\begin{equation}\label{eq:individual_agent_dynamics}
    x_i(t+1)=A_ix_i(t) + B_iu_i(t) + w_i(t),
\end{equation}
where $x_i(t)\in\cX_i\subseteq \R^{n_i}$, $u_i\in\cU_i\subseteq\R^{m_i}$, and $w_i(t)\in\cW_i\subseteq \R^{n_i}$ 
are the state, input and disturbance vectors, respectively, the initial condition, $x_i(0)$, is known, 
$(A_i,B_i)$ is a stabilizable pair, with $A_i\in\R^{n_i\times n_i}$, $B_i\in\R^{n_i\times m_i}$, $i\in\N_{[1,M]}$, 
and $t\in\N$. By collecting individual state, input, and disturbance vectors, as 
$x(t)=(x_1(t),\ldots,x_M(t))\in\cX\subseteq \R^n$, 
$u(t)=(u_1(t),\ldots,u_M(t))\in\cU\subseteq \R^m$, 
and $w(t)=(w_1(t),\ldots,w_M(t))\in\cW\subseteq \R^n$, respectively, 
we write the dynamics of the entire MAS as
\begin{equation}\label{eq:MAS}
    x(t+1) = Ax(t)+Bu(t)+w(t),
\end{equation}
\sloppy
where $A=\diag(A_1, \ldots ,A_M)$, $B=\diag(B_1, \ldots, B_M)$, and the state, 
input, and disturbance sets are 
$\cX=\cX_1\times \cdots \times \cX_M$, 
$\cU=\cU_1\times \cdots \times \cU_M$, and 
$\cW=\cW_1\times \cdots \times \cW_M$, respectively. 

\subsubsection{Disturbance} We assume that the uncertain sequence $\bm{w}_i(0)=(w_i(0),w_i(1),\ldots)$, with $i\in\N_{[1,M]}$, is an independent and identically distributed random process, and that $w_i(t)$ is a random vector with mean $\mathbb{E}(w_i(t))=0$ and positive definite covariance matrix $\mathrm{Cov}(w_i(t))=Q_i$, which is known, for all $t\in\N$. We also assume that $w_i(t)$, $i\in\N_{[1,M]}$, are independent for $t\in\N$. We denote by $\cD_{w_i}$ the distribution of the disturbance $w_i(t)\in\cW_i$, where $\cW_i$ is its support, which may be unbounded. We also may write that $w(t)\sim\cD_w$, with $\mathrm{supp}(\cD_w)=\cW=\cW_1\times\cdots\times \cW_M$, $\mathbb{E}(w(t))=0$, $\mathrm{Cov}(w(t))=\diag(Q_1,\ldots,Q_M)=Q$.


\subsubsection{STL specification} Let $\cV=\{1,\ldots,M\}$ be the set collecting the indices of all agents. 
The MAS is subject to a conjunctive STL formula $\phi$ with syntax as in \eqref{eq:STL_syntax}, where 
each conjunct is either a local subformula $\phi_i$ involving agent $i\in\cV$, or a joint subformula 
$\phi_\nu$ involving a subset of agents $\nu\subseteq\cV$, where $\nu$ is a clique. By collecting all cliques $\nu$ in $\cK_\phi$, the global STL task is 
\begin{align}
    \phi = \bigwedge_{i\in \cV} \phi_i \wedge \bigwedge_{\nu\in \cK_\phi} \phi_\nu. \label{eq:global_phi}
\end{align}
The structure of $\phi$ in \eqref{eq:global_phi} induces an interaction graph $\cG = (\cV,\cE)$, 
where $\cV$ is the set of nodes, and 
$\cE=\{(\nu_i,\nu_j)\mid \nu_i,\nu_j\in\nu,\; i\neq j, \; \nu \in \cK_\phi\}$ 
is the set of edges. 
Let $\pi\coloneqq(\mu(y)\geq 0)$ be a predicate in $\phi$, where 
$\mu(y)=a^\intercal y+b$, with $a,y\in\R^{n_y}$ and $b\in\R$. 
The vector $y\in\R^{n_y}$ represents either an individual state vector, $x_i\in\R^{n_i}$, $i\in\N_{[1,M]}$, 
or an aggregate vector, $x_\nu\in\R^{n_\nu}$, collecting the states of agents in the clique 
$\nu\in\cK_\phi$. Formula $\phi$ can specify tasks between subsets of agents, by representing their entirety as cliques. 

\subsection{Problem statement}

We wish to solve the  stochastic optimal control problem
\begin{subequations}\label{eq:multi_agent_problem}
\begin{align}
   &\operatorname*{Min.}_{\substack{\bm{u}(0),\\ \bm{x}(0)}}
        \mathbb{E}
        \left[
            \sum_{i=1}^M\big(\sum_{t=0}^{N-1}(\ell_i(x_i(t), u_i(t))) + V_{f,i}(x_i(N))\big)
        \right] \label{eq:cost_function_prob} \\
     &\mathrm{s.t.~}  x(t+1) = Ax(t)+Bu(t)+w(t),\; t\in\N_{[0,N)}, \label{eq:MAS_dynamics_prob} \\
        &\;\;\;\;\;\; \mathrm{Pr}\{\bm{x}(0) \models \phi\}\geq \theta, \; \mathrm{with}\; x(0)=x_0, \label{eq:MAS_STL_prob} 
\end{align}    
\end{subequations}
where $\ell_i:\R^{n_i}\times \R^{m_i}\to \R$, $V_{f,i}:\R^{n_i}\to \R$, the optimization variables are $\bm{u}(0)=(u(0),\ldots,u(N-1))$, $\bm{x}(0)=(x(0),\ldots,x(N))$, $\phi$ is a multi-agent STL formula, with structure as in \eqref{eq:global_phi} and syntax as in \eqref{eq:STL_syntax}, to be satisfied by $\bm{x}(0)$ with a probability $\theta\in(0,1)$, $x_0$ is a known initial condition of the MAS, and $N$ is the horizon of $\phi$.  
Solving the problem directly is challenging due to the probabilistic constraint, the expectation operator in the cost function, and uncertain dynamics. To handle complexity, especially for a large number of agents and complex $\phi$, we relax it with a deterministic problem, which subsequently, we decompose into smaller agent-level subproblems. Additionally, we make the following assumption.


\begin{assumption}\label{ass:global_problem}
For $x(0)=x_0$ and given $\theta\in(0,1)$, Problem \eqref{eq:multi_agent_problem} is feasible. 
\end{assumption}


\section{Main results}\label{sec:main_results}

\subsection{Error dynamics and construction of probabilistic tubes}
Due to the linearity in \eqref{eq:individual_agent_dynamics}, the state of each agent can be decomposed into a deterministic part, $z_i(t)$, and an error, $e_i(t)$, i.e., $x_i(t) = z_i(t) + e_i(t)$. 
Consider the causal control law $u_i(t)=K_ie_i(t)+v_i(t)$, where $K_i\in \R^{m_i\times n_i}$ 
is a stabilizing state-feedback gain for the pair $(A_i,B_i)$. Then, 
\begin{subequations}\label{eq:ith_decomposed_dynamics}
    \begin{align}  
    z_i(t+1)&=A_iz_i(t)+B_iv_i(t), \label{eq:ith_determ_dyn}\\
    e_i(t+1)&=\bar{A}_i e_i(t)+w_i(t),\label{eq:ith_error_dyn}
\end{align}    
\end{subequations}
where $z_i(0)=x_i(0)$, $e_i(0)=0$, and $\bar{A}_i=A_i+B_iK_i$. 
The above choice of $u_i(t)$ will allow us to control the size of the reachable sets of 
\eqref{eq:ith_error_dyn}, in light of the probabilistic constraint in \eqref{eq:MAS_STL_prob}. Define now the aggregate vectors $z(t)=(z_1(t),\ldots,z_M(t))$, $e(t)=(e_1(t),\ldots,e_M(t))$, and $v(t)=(v_1(t),\ldots,v_M(t))$, the block-diagonal state-feedback gain $K=\diag(K_1,\ldots,K_M)\in\R^{m\times n}$, and the block-diagonal closed-loop matrix $\bar{A}=\diag(\bar{A}_1,\ldots,\bar{A}_M)$. Then, we decompose \eqref{eq:MAS} into
\begin{subequations}\label{eq:MAS_decomposed_dynamics}
    \begin{align}  
    z(t+1)&=Az(t)+Bv(t),\label{eq:MAS_determin_dyn}\\
    e(t+1)&=\bar{A}e(t)+w(t).\label{eq:MAS_error_dyn}
\end{align}    
\end{subequations}
Given a particular state feedback gain $K$, the error system \eqref{eq:MAS_error_dyn} 
can be analyzed independently of \eqref{eq:MAS_determin_dyn}. 
As a closed-loop system driven by the random vector $w(t)$, we can predict its trajectory 
$\bm{e}(0)=(e(0),\ldots,e(N))$, with $e(0)=0$, by calculating its reachable sets probabilistically. Probabilistic reachable sets and tubes for system \eqref{eq:MAS_error_dyn} are defined next.

\begin{definition}
    A set $E(t)\subseteq \R^n$, $t\in\N_{[0,N]}$, is called a $t$-step probabilistic reachable set ($t$-PRS) for \eqref{eq:MAS_error_dyn} at probability level $\hat{\theta}_t\in[0,1]$, if $\mathrm{Pr}\{e(t)\in E(t)\mid e(0)=0\}\geq \hat{\theta}_t$. 
\end{definition}
It is worth noting that for a probability level $\hat{\theta}_t$ a $t$-PRS $E(t)$, $t\in\N_{[0,N]}$, for \eqref{eq:MAS_error_dyn} is not unique. 
\begin{definition}\label{def:tubes}
    Let $\bm{e}(0)=(e(0),\ldots,e(N))$ be a trajectory of \eqref{eq:MAS_error_dyn}. Then, $\bm{E}\subseteq\R^n\times \cdots\times \R^n$ is called a probabilistic reachable tube (PRT) for \eqref{eq:MAS_error_dyn} at probability level $\Theta\in[0,1]$, if $\mathrm{Pr}\{\bm{e}(0)\in \bm{E}\}\geq \Theta$. 
\end{definition}

\begin{definition}
    Let $w\sim \cD_w$, with $\mathrm{supp}(\cD_w)=\cW$. We call $\mathscr{E}_\theta(\cD_w)\subseteq \cW$ 
    a confidence region (CR) for $w\in\cW$ at probability level $\theta$, 
    if $\mathrm{Pr}\{w\in \mathscr{E}_\theta(\cD_w)\}\geq \theta$.
\end{definition}

CRs for $w_i(t)$, $i\in\N_{[1,M]}$, can be approximated via Monte Carlo methods or computed analytically using concentration inequalities depending on the properties of $\cD_{w}$. Here, since $\mathbb{E}(w_i(t))=0$ and $\mathrm{Cov}(w_i(t))=Q_i>0$, 
for $t\in\N$ and $i\in\N_{[1,M]}$, we construct ellipsoidal CRs at probability level 
$\theta_i$ as $\mathscr{E}_{\theta_i}(\cD_{w_i})=\{w_i\in \R^{n_i}\;|\;w_i^\intercal  Q_i^{-1}w_i\leq n_i/\theta_i\}$, 
by the multivariate Chebyshev's inequality. 
Next, we construct a CR for the aggregate random vector $w(t)=(w_1(t),\ldots,w_M(t))$.
\begin{lemma}\label{lemma:conf_region}
    Let $\mathscr{E}_{\theta_i}(\cD_{w_i})$ be a CR for $w_i(t)\in\cW_i$, $i\in\N_{[1,M]}$, 
    at probability level $\theta_i$, for all $t\in\N$. 
    Then, 
    \(\mathscr{E}_{\hat{\theta}}(\cD_w)
    {}=
    \mathscr{E}_{\theta_1}(\cD_{w_1})\times \cdots\times \mathscr{E}_{\theta_M}(\cD_{w_M})\), 
    is a CR of $w(t)\in\cW$ at probability level $\hat{\theta}$ for all $t\in\N$, 
    where $\hat{\theta}\geq \Pi_{i=1}^M\theta_i$.  
\end{lemma}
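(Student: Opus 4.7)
The plan is to exploit the independence assumption on the components $w_i(t)$ stated in Sec.~\ref{sec:MAS} (the disturbance subsection). First I would evaluate the probability that the aggregate disturbance lies in the Cartesian product of the individual CRs. Since $w(t)=(w_1(t),\ldots,w_M(t))$ and the $w_i(t)$ are mutually independent for each fixed $t\in\N$, the event $\{w(t)\in \mathscr{E}_{\theta_1}(\cD_{w_1})\times\cdots\times \mathscr{E}_{\theta_M}(\cD_{w_M})\}$ coincides with the intersection of the componentwise events $\{w_i(t)\in \mathscr{E}_{\theta_i}(\cD_{w_i})\}$, so the joint probability factorises as
\begin{equation*}
\mathrm{Pr}\{w(t)\in \mathscr{E}_{\theta_1}(\cD_{w_1})\times\cdots\times \mathscr{E}_{\theta_M}(\cD_{w_M})\}
= \prod_{i=1}^M \mathrm{Pr}\{w_i(t)\in \mathscr{E}_{\theta_i}(\cD_{w_i})\}.
\end{equation*}

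Next I would invoke the hypothesis that each $\mathscr{E}_{\theta_i}(\cD_{w_i})$ is a CR at level $\theta_i$, i.e., $\mathrm{Pr}\{w_i(t)\in \mathscr{E}_{\theta_i}(\cD_{w_i})\}\geq \theta_i$ for all $t\in\N$, so that the factorised expression is bounded below by $\prod_{i=1}^M \theta_i$. This already establishes that the Cartesian product set itself is a CR for $w(t)$ at probability level $\prod_{i=1}^M \theta_i$.

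To conclude the statement as written in the lemma, I would then appeal to monotonicity of probability measures: any superset $\mathscr{E}_{\hat{\theta}}(\cD_w)\supseteq \mathscr{E}_{\theta_1}(\cD_{w_1})\times\cdots\times \mathscr{E}_{\theta_M}(\cD_{w_M})$ inherits the lower bound, yielding $\mathrm{Pr}\{w(t)\in \mathscr{E}_{\hat{\theta}}(\cD_w)\}\geq \prod_{i=1}^M \theta_i$, which covers any admissible $\hat{\theta}$ satisfying $\hat{\theta}\geq \prod_{i=1}^M \theta_i$. Since the reasoning is time-invariant (the distribution of $w(t)$ does not depend on $t$, and independence holds at every instant), the bound is valid uniformly for all $t\in\N$.

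There is no real obstacle here; the argument is a direct application of independence and monotonicity. The only subtlety worth flagging is that the claim requires componentwise independence of $w_i(t)$ across $i$ at each fixed $t$ (not independence across time), and this is precisely what the disturbance assumption in Sec.~\ref{sec:MAS} provides; I would make this explicit to justify the factorisation step.
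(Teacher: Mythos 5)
Your argument is correct and essentially identical to the paper's proof: both bound $\mathrm{Pr}\{w(t)\in\mathscr{E}_{\hat{\theta}}(\cD_w)\}$ from below by the probability of the product set via monotonicity, factorise that product-set event using the componentwise independence of the $w_i(t)$ at each fixed $t$, and then apply the individual CR bounds $\theta_i$; the paper simply writes this out for $M=2$ without loss of generality, whereas you state it for general $M$. No gaps to report.
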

\begin{proof}
Without loss of generality let $M=2$. Then, 
\(\mathrm{Pr}\{w(t)\in\mathscr{E}_{\hat{\theta}}(\cD_w)\}
{}\geq{}
\mathrm{Pr}\{(w_1(t)\in\mathscr{E}_{\theta_1}(\cD_{w_1}))\wedge(w_2(t)\in\mathscr{E}_{\theta_2}(\cD_{w_2}))\}
{}={}
\mathrm{Pr}\{ w_1(t)\in\mathscr{E}_{\theta_1}(\cD_{w_1}) \}
\mathrm{Pr}\{w_2(t)\in\mathscr{E}_{\theta_2}(\cD_{w_2})\}
{}\geq{} 
\theta_1 \theta_2
\), which is true due to independence of $w_1(t)$, $w_2(t)$, for all $t\in\N$.
\end{proof}

Based on the CR construction of the disturbance $w(t)$, we construct $t$-PRSs at 
certain probability levels for the multi-agent error system \eqref{eq:MAS_error_dyn} as follows. 

\begin{proposition}\label{prop:tPRS_theta}
    Let 
    \(
    \mathscr{E}_{\hat{\theta}}(\cD_{w})
    {}={}
    \mathscr{E}_{\theta_1}(\cD_{w_1})\times \cdots\times \mathscr{E}_{\theta_M}(\cD_{w_M})
    \)
    be a CR for $w(t)$, where $\mathscr{E}_{\theta_i}(\cD_{w_i})$ is an ellipsoidal CR for 
    $w_i(t)$ at probability level $\theta_i$, $i\in\N_{[1,M]}$. 
    Then, the sets $E(t)\subseteq\R^n$, $t\in\N_{[0,N]}$, which are recursively defined as 
    $E(t+1)=\bar{A}E(t)\oplus \mathscr{E}_{\hat{\theta}}(\cD_{w})$, with 
    $E(0)=\{0\}\times \cdots\times \{0\}$, 
    are $t$-PRSs for \eqref{eq:MAS_error_dyn} at probability level $\hat{\theta}\geq \Pi_{i=1}^M\theta_i$, and $E(t)=E_1(t)\times \cdots\times E_M(t)$, $t\in\N_{[0,N]}$, where $E_i(t)$ is a $t$-PRS for \eqref{eq:ith_error_dyn} at probability level $\theta_i$, with $i\in\N_{[1,M]}$.  
\end{proposition}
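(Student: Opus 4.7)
The strategy is to decouple the claim into two parts: a purely set-theoretic Cartesian decomposition of $E(t)$, and a probabilistic containment bound. I would handle the decomposition first, by induction on $t$. The base case $E(0)=\{0\}\times\cdots\times\{0\}$ identifies $E_i(0)=\{0\}$ for each $i\in\N_{[1,M]}$. For the inductive step, assuming $E(t)=E_1(t)\times\cdots\times E_M(t)$, the block-diagonal structure of $\bar{A}=\diag(\bar{A}_1,\ldots,\bar{A}_M)$ yields $\bar{A}E(t)=(\bar{A}_1E_1(t))\times\cdots\times(\bar{A}_ME_M(t))$. Combined with the hypothesis $\mathscr{E}_{\hat{\theta}}(\cD_w)=\mathscr{E}_{\theta_1}(\cD_{w_1})\times\cdots\times\mathscr{E}_{\theta_M}(\cD_{w_M})$ and Lemma~\ref{lemma:Minkowski_sum} (distributivity of Minkowski sum over Cartesian products), this gives
\[
  E(t+1)=\prod_{i=1}^M\bigl(\bar{A}_iE_i(t)\oplus\mathscr{E}_{\theta_i}(\cD_{w_i})\bigr),
\]
from which I read off the per-agent recursion $E_i(t+1)=\bar{A}_iE_i(t)\oplus\mathscr{E}_{\theta_i}(\cD_{w_i})$, with $E_i(0)=\{0\}$.

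For the probabilistic claim, I would exploit the one-step implication that follows from the recursion and the error dynamics \eqref{eq:ith_error_dyn}: if $e_i(t)\in E_i(t)$ and $w_i(t)\in\mathscr{E}_{\theta_i}(\cD_{w_i})$, then $e_i(t+1)=\bar{A}_ie_i(t)+w_i(t)\in E_i(t+1)$ by the construction of $E_i(t+1)$. Since $e_i(0)=0\in E_i(0)$ and $\mathrm{Pr}\{w_i(t)\in\mathscr{E}_{\theta_i}(\cD_{w_i})\}\geq\theta_i$ by the CR hypothesis, this qualifies $E_i(t)$ as a $t$-PRS at level $\theta_i$ for the individual dynamics. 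Applying the analogous one-step argument at the aggregate level to \eqref{eq:MAS_error_dyn} with the aggregate CR $\mathscr{E}_{\hat{\theta}}(\cD_w)$, and invoking Lemma~\ref{lemma:conf_region} to conclude $\mathrm{Pr}\{w(t)\in\mathscr{E}_{\hat{\theta}}(\cD_w)\}\geq\prod_{i=1}^M\theta_i$, yields the bound $\hat{\theta}\geq\prod_{i=1}^M\theta_i$ with $E(t)$ regarded as a $t$-PRS of \eqref{eq:MAS_error_dyn}.

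The main obstacle is keeping the two distinct layers of independence clean: independence of $w_i(t)$ across agents $i$ (which powers Lemma~\ref{lemma:conf_region} and produces the factor $\prod_{i=1}^M\theta_i$) must not be confounded with independence across time, which enters only implicitly through the Markovian one-step recursion of the reachable-set construction. Once these are disentangled, the Cartesian decomposition effectively reduces the multi-agent reachability analysis to $M$ single-agent problems, and the proposition follows from Lemma~\ref{lemma:Minkowski_sum} together with the one-step reachability implication applied recursively along the trajectories of \eqref{eq:ith_error_dyn} and \eqref{eq:MAS_error_dyn}.
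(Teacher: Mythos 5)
Your set-theoretic half is fine and matches the paper: induction on $t$, block-diagonality of $\bar{A}$, and Lemma~\ref{lemma:Minkowski_sum} give $E(t)=E_1(t)\times\cdots\times E_M(t)$ with the per-agent recursion $E_i(t+1)=\bar{A}_iE_i(t)\oplus\mathscr{E}_{\theta_i}(\cD_{w_i})$, $E_i(0)=\{0\}$.

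The probabilistic half has a genuine gap. Your one-step implication is: if $e_i(t)\in E_i(t)$ \emph{and} $w_i(t)\in\mathscr{E}_{\theta_i}(\cD_{w_i})$, then $e_i(t+1)\in E_i(t+1)$. Applying this recursively along a trajectory bounds $\mathrm{Pr}\{e_i(t)\in E_i(t)\}$ only by the probability that \emph{all} of $w_i(0),\ldots,w_i(t-1)$ land in their confidence regions, i.e.\ by $\theta_i^{\,t}$ (using time-independence), not by $\theta_i$. The proposition claims the marginal level $\theta_i$ for every $t$, so the jump ``this qualifies $E_i(t)$ as a $t$-PRS at level $\theta_i$'' does not follow from your argument; for $t\geq 2$ and $\theta_i<1$ your bound is strictly weaker. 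The paper closes this gap with a distributional, not trajectory-wise, argument: letting $\cD_{e_i(t)}$ denote the distribution of $e_i(t)$, it uses the property (from \cite[Cor.~4]{HewingECC2018}, valid because the CRs are \emph{ellipsoidal}) that $\mathscr{E}_{\theta_i}(\cD_{e_i(t+1)})\subseteq\bar{A}_i\,\mathscr{E}_{\theta_i}(\cD_{e_i(t)})\oplus\mathscr{E}_{\theta_i}(\cD_{w_i})$, and then shows inductively that $\mathscr{E}_{\theta_i}(\cD_{e_i(t)})\subseteq E_i(t)$, which gives $\mathrm{Pr}\{e_i(t)\in E_i(t)\}\geq\theta_i$ for each $t$ directly. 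Note that you never invoke the ellipsoidal hypothesis in the statement --- a telltale sign the key ingredient is missing. Finally, the aggregate bound in the paper is obtained from these per-agent marginal bounds together with independence of $e_1(t),\ldots,e_M(t)$ across agents (each $e_i(t)$ depends only on $w_i(0{:}t-1)$), i.e.\ $\mathrm{Pr}\{e(t)\in E(t)\}=\prod_{i=1}^M\mathrm{Pr}\{e_i(t)\in E_i(t)\}\geq\prod_{i=1}^M\theta_i$; routing it instead through an aggregate one-step argument plus Lemma~\ref{lemma:conf_region}, as you propose, inherits the same $\theta^t$-type degradation unless the per-agent marginal bounds are first established as above.
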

\begin{proof} 
    Since $E(0)=\{0\}\times \cdots\times \{0\}$, we may write 
    $E(0)=E_1(0)\times \cdots\times E_M(0)$, 
    with $E_i(0)=\{0\}$, $i\in\N_{[1,M]}$, 
    from which we compute 
    \(
        E(1)
        {}={}
        \bar{A}E(0)\oplus \mathscr{E}_{\hat{\theta}}(\cD_w)
        {}={}
        (\diag(\bar{A}_1,\ldots,\bar{A}_M)E_1(0)\times \cdots\times E_M(0))
        {}\oplus{}
        (\mathscr{E}_{\theta_1}(\cD_{w_1})\times \cdots\times \mathscr{E}_{\theta_M}(\cD_{w_M}))
        {}={}
        (\bar{A}_1E_1(0)\times \cdots\times \bar{A}_ME_M(0))
        {}\oplus{}
        (\mathscr{E}_{\theta_1}(\cD_{w_1})\times \cdots\times \mathscr{E}_{\theta_M}(\cD_{w_M}))
    \), 
    which from Lemma \ref{lemma:Minkowski_sum} results in 
    \(
    E(1)
    {}={}
    \bar{A}_1 E_1(0)
    {}\oplus{}
    \mathscr{E}_{\theta_1}(\cD_{w_1})\times \cdots\times \bar{A}_M E_M(0)
    {}\oplus{}
    \mathscr{E}_{\theta_M}(\cD_{w_M})
    \), 
    that is, $E(1)=E_1(1)\times \cdots\times E_M(1)$, 
    with $E_i(1)=\bar{A}_i E_i(0)\oplus\mathscr{E}_{\theta_i}(\cD_{w_i})$, $i\in\N_{[1,M]}$. 
    Following the recursion, one can show that 
    \(E_i(t+1)=\bar{A}_iE_i(t)\oplus \mathscr{E}_{\theta_i}(\cD_{w_i})\), 
    for \(t\in\N_{[0,N-1]}\), and 
    \(E(t)=E_1(t)\times \cdots\times E_M(t)\),
    for $t\in\N_{[0,N]}$.  

    Let now $\cD_{e_i(t)}$ be the distribution of $e_i(t)$, and $\mathscr{E}_{\theta_i}(\cD_{e_i(t)})$ be a CR for $e_i(t)$ at probability $\theta_i$. Since, $\mathscr{E}_{\theta_i}(\cD_{e_i(0)})\subseteq E_i(0)=\{0\}$, we have $\bar{A}_i\mathscr{E}_{\theta_i}(\cD_{e(0)})\oplus \mathscr{E}_{\theta_i}(\cD_{w_i})\subseteq \bar{A}_iE_i(0)\oplus \mathscr{E}_{\theta_i}(\cD_{w_i})=E_i(1)$, so $\mathscr{E}_{\theta_i}(\cD_{e_i(1)})\subseteq E_i(1)$, as $\mathscr{E}_{\theta_i}(\cD_{e_i(t+1)})\subseteq \bar{A}_i\mathscr{E}_{\theta_i}(\cD_{e_i(t)})\oplus\mathscr{E}_{\theta_i}(\cD_{w_i})$ for all $t\in\N$, since $\mathscr{E}_{\theta_i}(\cD_{w_i})$ is an ellipsoidal region by \cite[Cor. 4]{HewingECC2018}. Inductively we show that $\mathscr{E}_{\theta_i}(\cD_{e_i(t)})\subseteq E_i(t)$, $i\in\N_{[1,M]}$, for all $t\in\N$. The latter implies that $\mathrm{Pr}\{e_i(t)\in E_i(t)\}\geq \theta_i$, $i\in\N_{[1,M]}$, from which we have $\mathrm{Pr}\{e(t)\in E(t)\}=\mathrm{Pr}\{(e_1(t)\in E_1(t))\wedge\cdots\wedge(e_M(t)\in E_M(t))\}= \mathrm{Pr}\{(e_1(t)\in E_1(t))\} \cdots  \mathrm{Pr}\{(e_M(t)\in E_M(t))\}\geq \theta_1\theta_2\cdots\theta_M$, which follows from the independence of $E_i(t)$, $i\in\N_{[1,M]}$. 
\end{proof}

Prop. \ref{prop:tPRS_theta} leads to the following PRT result.
\begin{theorem}\label{thm:tubes}
    Let $\mathscr{E}_{\hat{\theta}}(\cD_{w})=\mathscr{E}_{\theta_1}(\cD_{w_1})\times \cdots\times \mathscr{E}_{\theta_M}(\cD_{w_M})$ be a CR for $w(t)$, and define $t$-PRSs, $E(t)$, $t\in\N_{[0,N]}$ for system \eqref{eq:MAS_error_dyn} at probability level $\hat{\theta}\geq \Pi_{i=1}^M\theta_i$, where $\theta_i$ is the confidence level of the region $\mathscr{E}_{\theta_i}(\cD_{w_i})$, $i\in\N_{[1,M]}$, as in Prop. \ref{prop:tPRS_theta}. Then, i) $\bm{E}_i=E_i(0)\times \cdots\times E_i(N)$ is a PRT for \eqref{eq:ith_error_dyn} at probability level $\Theta_i\geq 1-N(1-\theta_i)$, where $E_i(t)$, $t\in\N_{[0,N]}$, is a $t$-PRS for \eqref{eq:ith_error_dyn} at probability level $\theta_i$. ii) $\bm{E}=E(0)\times \cdots \times E(N)$ is a PRT for \eqref{eq:MAS_error_dyn} at probability level $\Theta =\Pi_{i=1}^M\Theta_i$. iii) Let $e_\nu(t+1)=\bar{A}_\nu e_\nu(t)+w_\nu(t)$ be the aggregate system collecting individual error systems from the clique $\nu\in\cK_\phi$, where $\nu=(i_1,\ldots,i_{|\nu|})$, $e_\nu(t)=(e_{i_1}(t),\ldots,e_{i_{|\nu|}}(t))$, $w_\nu(t)=(w_{i_1}(t),\ldots,w_{i_{|\nu|}}(t))$, and $\bar{A}_\nu=\diag(\bar{A}_{i_1},\ldots,\bar{A}_{i_{|\nu|}})$, and let $E_\nu(t)=E_{i_1}(t)\times \cdots \times E_{i_{|\nu|}}(t)$, $t\in\N_{[0,N]}$,  be its $t$-PRSs, with $E_{i_j}(t)$, being $t$-PRS for \eqref{eq:ith_error_dyn} at probability level $\theta_{i_j}$, with $j\in\N_{[1,|\nu|]}$. Then, $\bm{E}_\nu=E_\nu(0)\times \cdots\times E_\nu(N)$ is a PRT at probability level $\Theta =\Pi_{j=1}^{|\nu|}\Theta_{i_j}$.
\end{theorem}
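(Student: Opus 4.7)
The plan is to prove the three claims sequentially, using Proposition \ref{prop:tPRS_theta} as the entry point and then combining it with a union bound in time and the cross-agent independence of the disturbance process in the right order.

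For part i), I would start by rewriting the complement event
\[
\{\bm{e}_i(0)\notin \bm{E}_i\} = \bigcup_{t=0}^{N}\{e_i(t)\notin E_i(t)\}
\]
and apply the union bound. Proposition \ref{prop:tPRS_theta} already guarantees $\Pr\{e_i(t)\in E_i(t)\}\geq \theta_i$ for every $t\in\N_{[0,N]}$, but I will use the special structure at $t=0$: since $e_i(0)=0$ and $E_i(0)=\{0\}$, the term $\Pr\{e_i(0)\notin E_i(0)\}=0$, so only $N$ nontrivial terms survive and the union bound yields $\Pr\{\bm{e}_i(0)\notin \bm{E}_i\}\leq N(1-\theta_i)$, which is exactly the claimed lower bound $\Theta_i\geq 1-N(1-\theta_i)$.

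For part ii), the key observation is that the trajectories $\bm{e}_i(0)$ of the individual error systems \eqref{eq:ith_error_dyn} are mutually independent across $i$, because $\bar A$ is block diagonal and the driving noises $\bm{w}_i(0)$ are independent across $i$ by the standing assumption in Sec.\ \ref{sec:MAS}. Combining this independence with the product structure $E(t)=E_1(t)\times\cdots\times E_M(t)$ from Proposition \ref{prop:tPRS_theta}, I would reorder the intersections as
\[
\bigcap_{t=0}^{N}\{e(t)\in E(t)\} = \bigcap_{i=1}^{M}\bigcap_{t=0}^{N}\{e_i(t)\in E_i(t)\} = \bigcap_{i=1}^{M}\{\bm{e}_i(0)\in \bm{E}_i\},
\]
so that $\Pr\{\bm{e}(0)\in \bm{E}\} = \prod_{i=1}^{M}\Pr\{\bm{e}_i(0)\in \bm{E}_i\}\geq \prod_{i=1}^{M}\Theta_i$, invoking part i) at the final step.

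Part iii) follows by exactly the same argument as part ii), simply restricted to the index set $\nu=(i_1,\ldots,i_{|\nu|})\subseteq \cV$: the aggregate error dynamics $e_\nu(t+1)=\bar A_\nu e_\nu(t)+w_\nu(t)$ inherits block-diagonality, the individual sub-trajectories $\bm{e}_{i_j}(0)$ are independent across $j$, and reordering intersections plus part i) give $\Pr\{\bm{e}_\nu(0)\in \bm{E}_\nu\}\geq \prod_{j=1}^{|\nu|}\Theta_{i_j}$. The main conceptual obstacle is not any single calculation but rather keeping the two distinct probabilistic mechanisms straight: within a single agent the errors $e_i(t)$ at different time instants are \emph{dependent} through the closed-loop map $\bar A_i$, which forces the use of a union bound and hence the additive loss $N(1-\theta_i)$; whereas across agents the errors are \emph{independent}, which legitimises the multiplicative product $\prod_i\Theta_i$ rather than another union bound. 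Making this separation explicit in the write-up is what guarantees the tightness of the stated bounds.
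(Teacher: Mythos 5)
Your proposal is correct and follows essentially the same route as the paper: a union bound over time (with the trivial $t=0$ term dropped) for part i), independence across agents together with a reordering of the time/agent intersections for part ii), and the same argument restricted to the clique indices for part iii). Your explicit justification of cross-agent independence via the block-diagonal $\bar{A}$ and independent $w_i$ is precisely what the paper means by independence ``by construction.''
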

\begin{proof} 
    i) From Prop. \ref{prop:tPRS_theta}, we have that $E(t)=E_1(t)\times \cdots \times E_M(t)$, where $E_i(t)$ is a $t$-PRS for \eqref{eq:ith_error_dyn} at probability level $\theta_i$. Let $\bm{e}_i(0)=(e_i(0),\ldots,e_i(N))$ be a trajectory of \eqref{eq:ith_error_dyn}. Then, $\mathrm{Pr}\{\bm{e}_i(0)\in\bm{E}_i\}=\mathrm{Pr}\{(e_i(0)\in E_i(0))\wedge\cdots\wedge(e_i(N)\in E_i(N))\}=1-\mathrm{Pr}\{(e_i(0)\notin E_i(0))\lor\cdots\lor(e_i(N)\notin E_i(N))\}\geq 
    1-\sum_{t=0}^N\mathrm{Pr}\{e_i(t)\notin E_i(t)\}=1-N(1-\theta_i)$, where we use Boole's inequality, that $E_i(t)$, $t\in\N_{[0,N]}$, is a $t$-PRS at probability level $\theta_i$, and $\mathrm{Pr}(e_i(0)\notin E_i(0)\}=0$. ii) It holds that $\mathrm{Pr}\{\bm{e}(0)\in\bm{E}\}=\mathrm{Pr}\{\left((e_1(0)\in E_1(0))\wedge\cdots\wedge(e_M(0)\in E_M(0))\right)\wedge\cdots\wedge\left((e_1(N)\in E_1(N))\wedge\cdots\wedge(e_M(N)\in E_M(N))\right)\}=\mathrm{Pr}\{\left((e_1(0)\in E_1(0))\wedge\cdots\wedge(e_1(N)\in E_1(N))\right)\wedge\cdots\wedge\left((e_M(0)\in E_M(0))\wedge\cdots\wedge(e_M(N)\in E_M(N))\right)\}$, which is $\mathrm{Pr}\{(\bm{e}_1(0)\in \bm{E}_1)\wedge\cdots\wedge(\bm{e}_M(0)\in \bm{E}_M)\} =\mathrm{Pr}\{\bm{e}_1(0)\in \bm{E}_1\}\cdots\mathrm{Pr}\{\bm{e}_M(0)\in \bm{E}_M\}=\Pi_{i=1}^M\Theta_i$, by the independence of the PRTs $\bm{E}_i$, $i\in\N_{[1,M]}$. iii) By setting $M=|\nu|$ the result follows from Prop. \ref{prop:tPRS_theta} and item ii) herein. 
\end{proof}

\begin{remark}
    Note that our PRT construction reduces conservatism for a large number of agents, while utilizing the union-bound argument only over time.  
    This may require conservative choices for the probability levels, $\theta_i$, 
    for the CRs of $w_i(t)$, $i\in\N_{[1,M]}$, for large horizons. 
    To construct, e.g., a PRT for \eqref{eq:MAS_error_dyn} at probability level 
    $\Theta$, one may select uniform probability levels for the CRs as 
    $\theta_i\geq 1-\frac{1-\Theta^{\frac{1}{M}}}{N}$,
    where $\theta_i\to 1$ for large $N$, regardless of $\Theta$. 
\end{remark}

    

\subsection{Constraint tightening}


We aim to design a trajectory for the deterministic system \eqref{eq:MAS_determin_dyn} that satisfies an STL formula derived from $\phi$, incorporating \textit{tighter} predicates. 
The following proposition underpins this approach. 
\begin{proposition}\label{prop:MAS_chance_constraint}
Let  $\bm{x}(0)=\bm{z}(0)+\bm{e}(0)$, with $\bm{x}(0)=(x(0),\ldots,x(N))$, $\bm{z}(0)=(z(0),\ldots,z(N))$ and $\bm{e}(0)=(e(0),\ldots,e(N))$. 
Suppose that $\mathrm{Pr}\{\bm{e}(0)\in\bm{E}\}\geq \theta$, for some $\bm{E}=E(0)\times\cdots\times E(N)$, with $E(t)\subseteq \R^n$, $t\in\N$. If $\bm{z}(0)+\bm{e}(0)\models \phi$ for all $\bm{e}(0)\in\bm{E}$, then $\mathrm{Pr}\{\bm{x}(0)\models \phi\}\geq \theta$.
\end{proposition}
\begin{proof}
    Define events $Y_x:=\bm{x}(0)\models \phi$, $Y_e:=\bm{e}(0)\in\bm{E}$, and $Y_e':=\bm{e}(0)\notin \bm{E}$. From the law of total probability, we have $\mathrm{Pr}\{Y_x\}=\mathrm{Pr}\{Y_x|Y_e\}\mathrm{Pr}\{Y_e\}+\mathrm{Pr}\{Y_x|Y_e'\}\mathrm{Pr}\{Y_e'\}\geq \theta$, since by assumption, $\mathrm{Pr}\{Y_x|Y_e\}=1$ and $\mathrm{Pr}\{Y_e\}\geq \theta$, and $\mathrm{Pr}\{Y_x|Y_e'\}\mathrm{Pr}\{Y_e'\}\geq 0$.  
\end{proof}

Let $\bm{E}$ be a PRT for \eqref{eq:MAS_error_dyn} at probability level $\Theta$. Next, we construct a formula $\psi$ such that $\bm{z}(0)\models \psi$ implies that $\bm{z}(0)+\bm{e}(0)\models \phi$, for all $\bm{e}(0)\in\bm{E}$, that is, $\mathrm{Pr}\{\bm{x}(0)\models \phi\} \geq \theta$, by Prop. \ref{prop:MAS_chance_constraint}. Formula $\psi$ has identical Boolean and temporal operators with $\phi$ in \eqref{eq:global_phi}, and retains its multi-agent structure:
\begin{align}
    \psi = \bigwedge_{i\in \cV} \psi_i \wedge \bigwedge_{\nu\in \cK_\phi} \psi_\nu. \label{eq:global_psi}
\end{align}
Let $\pi:=(a^\intercal y+b\geq 0)$ be a predicate, with $a,y\in\R^{n_y}$, $b\in\R$. We denote by $\tau(\pi)$ the \textit{tighter} version of $\pi$, where $\tau(\pi)\in \psi$ if $\pi\in \phi$, and $\neg \tau(\pi)\in \psi$ if $\neg \pi \in \phi$, with 
\begin{subequations}\label{eq:all_predicates}
\begin{align}
   &\tau(\pi):=(a^\intercal y + b +\min_{g\in G} a^\intercal g \geq 0),\; \text{if\;} \tau(\pi) \in \psi, \label{eq:tight1}\\
    &\tau(\pi):=(a^\intercal y + b +\max_{g\in G} a^\intercal g \geq 0), \text{if\;}\neg\tau(\pi)\in \psi. \label{eq:tight2}
\end{align}    
\end{subequations}
Here, $G=\bigcup_{t=1}^N E_i(t)$ if $y=x_i(t)$, for some $i\in \cV$,
or $G=\bigcup_{t=1}^N E_\nu(t)$ if $y=x_\nu(t)$ for some $\nu \in \cK_\phi$. 
We remark that the optimizations in \eqref{eq:all_predicates} are tractable since the domain $G$ is the union of finitely many convex and compact sets by the construction of $t$-PRSs, $E_i(t)$, $E_\nu(t)$, $t\in\N_{[1,N]}$, in Prop. \ref{prop:tPRS_theta} and Thm. \ref{thm:tubes}. Practically, the tightening in \eqref{eq:all_predicates} can be retrieved by solving a convex optimization by taking the convex hull of $G$, or, better, by solving $N$ convex optimization problems, one for every $E_i(t)$, $E_\nu(t)$, $t\in\N_{[1,N]}$, and selecting the worst-case (minimum for \eqref{eq:tight1} and maximum for \eqref{eq:tight2}) solution among them. We are now ready to state the following result.

\begin{theorem}\label{thm:deterministic_problem}
    Let $\psi$ be the STL formula resulting from $\phi$ according to \eqref{eq:global_psi}-\eqref{eq:all_predicates}, and assume that the deterministic problem:
    \begin{subequations}\label{eq:multi_agent_problem_tight}
    \begin{align}
    &\operatorname*{Minim.}_{\substack{\bm{v}(0),\\ \bm{z}(0)}} \sum_{i=1}^M\left(\sum_{t=0}^{N-1}(\ell_i(z_i(t), v_i(t))) + V_{f,i}(z_i(N))\right)
         \label{eq:cost_function_prob_tight} \\
     &\mathrm{s.t.~}  z(t+1) = Az(t)+Bv(t),\; t\in\N_{[0,N)},\label{eq:MAS_dynamics_prob_deterministic} \\
        &\;\;\;\;\;\; \bm{z}(0) \models \psi, \; \mathrm{with}\; z(0)=x_0, \label{eq:MAS_STL_prob_tight} 
    \end{align}    
    \end{subequations}
    has a feasible solution $\bm{v}(0)=(v(0),\ldots,v(N-1))$, with $v(t)\in \cU\ominus KE(t)$, $t\in\N_{[0,N-1]}$, where $K$ is a stabilising gain for $(A, B)$ in \eqref{eq:MAS_error_dyn}, and $E(t)=E_1(t)\times \cdots\times E_M(t)$, $E_i(t)$, $i\in\N_{[1,M]}$, being $t$-PRS for \eqref{eq:ith_error_dyn}, at probability level $\theta_i$, such that $\Pi_{i=1}^M\left(1-N(1-\theta_i)\right)\geq \theta$. Let $\bm{e}(0)=(e(0),\ldots,e(N))$ be a trajectory of \eqref{eq:MAS_error_dyn}. Then, $\bm{u}(0)=\diag(K,\ldots,K)\bm{e}(0)+\bm{v}(0)$ is a feasible solution for \eqref{eq:multi_agent_problem}. 
\end{theorem}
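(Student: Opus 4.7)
My plan is to verify that $(\bm{u}(0), \bm{x}(0))$ with $\bm{u}(0) = \tilde K \bm{e}(0) + \bm{v}(0)$ is feasible for \eqref{eq:multi_agent_problem}. The dynamics \eqref{eq:MAS_dynamics_prob} are automatically satisfied by construction through the decomposition \eqref{eq:MAS_decomposed_dynamics}, while input admissibility $u(t) = K e(t) + v(t) \in \cU$ follows for every realisation $e(t) \in E(t)$ from $v(t) \in \cU \ominus K E(t)$ and the definition of the Pontryagin difference. The substantive step is the chance constraint \eqref{eq:MAS_STL_prob}, which I would discharge through Proposition \ref{prop:MAS_chance_constraint}: it suffices to exhibit a PRT $\bm{E}$ for \eqref{eq:MAS_error_dyn} of level at least $\theta$ on which $\bm{z}(0) + \bm{e}(0) \models \phi$ holds for every $\bm{e}(0) \in \bm{E}$.

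For the probability bound, I would take $\bm{E} = E(0) \times \cdots \times E(N)$ with $E(t) = E_1(t) \times \cdots \times E_M(t)$ as constructed in Proposition \ref{prop:tPRS_theta}. Theorem \ref{thm:tubes}(ii) then yields $\mathrm{Pr}\{\bm{e}(0) \in \bm{E}\} \geq \prod_{i=1}^M \Theta_i \geq \prod_{i=1}^M(1 - N(1-\theta_i)) \geq \theta$, where the last inequality is the standing hypothesis of the theorem.

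The deterministic implication $\bm{z}(0) \models \psi \Longrightarrow \bm{z}(0) + \bm{e}(0) \models \phi$ for every $\bm{e}(0) \in \bm{E}$ is the technical core. Since $\psi$ inherits the Boolean and temporal structure of $\phi$ via \eqref{eq:global_psi} and differs only at the leaves through \eqref{eq:all_predicates}, I would proceed by structural induction on the STL syntax \eqref{eq:STL_syntax}. At the base level, for $\pi = (a^\intercal y + b \geq 0) \in \phi$ with $y$ denoting $x_i$ or $x_\nu$ and corresponding error component $e_y$, Assumption \ref{ass:neg_pi} guarantees $\pi$ appears in $\phi$ in a single polarity. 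If $\tau(\pi)$ of form \eqref{eq:tight1} holds on $z$ at time $t$, then $a^\intercal z_y(t) + b + \min_{g \in G} a^\intercal g \geq 0$; because $e_y(t)$ lies in the appropriate tube slice and therefore in $G$, one has $a^\intercal e_y(t) \geq \min_{g \in G} a^\intercal g$, whence $a^\intercal (z_y(t) + e_y(t)) + b \geq 0$, i.e., $\pi$ holds on $x_y(t)$. The negative polarity, using \eqref{eq:tight2}, is symmetric: $\bm{z}(0) \not\models \tau(\pi)$ forces $a^\intercal z_y(t) + b + \max_{g \in G} a^\intercal g < 0$, and $a^\intercal e_y(t) \leq \max_{g \in G} a^\intercal g$ yields $a^\intercal x_y(t) + b < 0$, i.e., $\neg \pi$ on $x_y(t)$.

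The main subtlety, requiring the most care, is propagating the pointwise predicate-level implication through $\wedge$ and $U_{[a,b]}$ (and the derived $\vee$, $\square$, $\lozenge$) in the inductive step. This is manageable because the time index $t$ at which each predicate is evaluated in the recursive STL semantics is the same for $\psi$ and $\phi$, and the tubes $E_i(t)$, $E_\nu(t)$ used in the tightening are indexed by precisely that $t$; since the Boolean and temporal operators are monotone in predicate satisfaction, the induction closes. Combining this with the probability bound delivers $\mathrm{Pr}\{\bm{x}(0) \models \phi\} \geq \theta$ via Proposition \ref{prop:MAS_chance_constraint}, completing the feasibility check.
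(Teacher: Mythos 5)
Your proposal is correct and follows essentially the same route as the paper's proof: bound $\mathrm{Pr}\{\bm{e}(0)\in\bm{E}\}\geq\theta$ via Theorem \ref{thm:tubes}, show the predicate-level tightening \eqref{eq:all_predicates} yields $\bm{z}(0)\models\psi \Rightarrow \bm{z}(0)+\bm{e}(0)\models\phi$ for all $\bm{e}(0)\in\bm{E}$ (which the paper asserts directly where you sketch the structural induction and polarity cases more explicitly), and conclude with Proposition \ref{prop:MAS_chance_constraint}. Your additional check of input admissibility via $v(t)\in\cU\ominus KE(t)$ matches the paper's feasibility claim for $\bm{u}(0)=\tilde{K}\bm{e}(0)+\bm{v}(0)$.
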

\begin{proof} 
    By Thm. \ref{thm:tubes}, $\bm{E}=E(0)\times \cdots\times E(N)$ is a PRT for \eqref{eq:MAS_error_dyn} at probability level $\Theta\geq \theta$, that is, $\mathrm{Pr}\{\bm{e}(0)\in\bm{E}\}\geq \theta$. Let $\bm{z}(0)\models \psi$ be a trajectory resulting from the input trajectory $\bm{v}(0)$ starting from $z(0)=x_0$. 
    By the tightening in \eqref{eq:all_predicates}, we have that for all $\pi\in\phi$ and $\tau(\pi)\in\psi$, if $\bm{z}(0)\models \tau(\pi)$, then $\bm{z}(0)+\bm{e}(0)\models \pi$ $\forall \bm{e}(0)\in\bm{E}$, and for all $\neg\pi'\in\phi$ and $\neg\tau(\pi')\in\psi$, if $\bm{z}(0)\models \neg\tau(\pi')$, then $\bm{z}(0)+\bm{e}(0)\models \neg\pi'$ $\forall \bm{e}(0)\in\bm{E}$. Since $\phi$ and $\psi$ differ only in predicates, it follows that if $\bm{z}(0)\models \psi$, then $\bm{z}(0)+\bm{e}(0)\models\phi$ $\forall \bm{e}(0)\in\bm{E}$. Since $\bm{u}(0)=\diag(K,\ldots,K)\bm{e}(0)+\bm{v}(0)$ is a feasible input trajectory for \eqref{eq:MAS} $\forall \bm{e}(0)\in \bm{E}$, the resulting state trajectory of \eqref{eq:MAS}, $\bm{x}(0)=\bm{z}(0)+\bm{e}(0)$, ensures that $\bm{x}(0)\models \phi$ $\forall \bm{e}(0)\in \bm{E}$. The result follows by Prop. \ref{prop:MAS_chance_constraint} since $\mathrm{Pr}\{\bm{e}(0)\in \bm{E}\}\geq \theta$.
\end{proof}

The gain $K$ affects the feasible domain of \eqref{eq:multi_agent_problem_tight} and the volume of $\bm{E}$. Its construction will be addressed in future work. By selecting $\|{}\cdot{}\|_1$-based costs, problem \eqref{eq:multi_agent_problem_tight} can be formulated as a mixed-integer linear program (MILP) \cite{MurrayCDC2014}. Next, we decompose \eqref{eq:multi_agent_problem_tight} into individual agent-level problems to address its complexity.

\subsection{Distributed control synthesis}\label{sec:distributed_control_synthesis}

We propose an iterative procedure that handles the complexity of \eqref{eq:multi_agent_problem_tight}.  
First, we assume the following.
\begin{assumption}\label{ass:global_problem_tight}
The optimization \eqref{eq:multi_agent_problem_tight} has a feasible solution $\bm{v}(0)=(v(0),\ldots,v(N-1))$, $\bm{z}(0)=(z(0),\ldots,z(N))$, where $v(t)\in \cU\ominus KE(t)$, $t\in\N_{[0,N)}$, with the gain $K$ and the $t$-PRSs $E(t)$, $t\in\N_{[0,N]}$, being as in Thm. \ref{thm:deterministic_problem}.
\end{assumption}

\subsubsection{Decomposition of STL formula $\psi$} For a node $i$ participating in at least one clique, i.e., $i\in\nu$, with $\nu\in\cK_\phi$, we define $\cT_i$ by the set of cliques containing $i$ excluding $i$, i.e.,
\begin{equation}\label{eq:Ti}
    \cT_i = \{\nu\setminus i: \nu \in \operatorname{cl}(i)\},
\end{equation}
where $\operatorname{cl}(i) = \{\nu \in \cK_\phi,\; \nu \ni i\}$ is the set of cliques that contain $i$. Let $j\in \cT_i$, with $j=(i_1,\ldots,i_{|j|})$. Let a trajectory $\bm{z}_{ij}(0)=(z_{ij}(0),\ldots,z_{ij}(N))$, where $z_{ij}(t)=(z_{i_1}(t),\ldots,z_i(t),\ldots,z_{i_{|j|}}(t))$, with $t\in\N_{[0,N]}$, and the order $i_1<\ldots<i<\ldots<i_{|j|}$ being specified by the lexicographic ordering of the node set $\cV=\N_{[1,M]}$. Using \eqref{eq:Ti}, an  equivalent formula to the \textit{tighter} formula  \eqref{eq:global_psi}-\eqref{eq:all_predicates}, $\psi$, is defined as $\hat{\psi}=\bigwedge_{i\in \cV}\hat{\psi}_i$, where
\begin{equation}\label{eq:psi_hat}
\hat{\psi}_i=\psi_i\wedge\bigwedge_{j\in\cT_i}\psi_{ij}.
\end{equation}

\subsubsection{Iterative algorithm}
For simplicity, we drop the time argument and introduce an iteration index as a superscript in the trajectory notation, e.g., $\bm{z}_i^k$ ($\bm{z}_{ij}^k$) indicates a trajectory $\bm{z}_i(0)$ ($\bm{z}_{ij}(0)$) that is retrieved at the $k$th iteration of the following procedure. To initialize the procedure, we generate initial guesses on the agents' trajectories by solving
\begin{subequations}\label{eq:initial_local_problem}
\begin{align}
    &\operatorname*{Minimize}_{\substack{\bm{v}_i^0, \bm{z}_i^0}}\sum_{t=0}^{N-1}(\ell_i(z_i^0(t), v_i^0(t))) + V_{f,i}(z_i^0(N)) \label{eq:cost_intro} \\
     &\mathrm{s.t.~} z_i^0(t+1) = A_iz_i^0(t)+B_iv_i^0(t), \; t\in\N_{[0,N)}, 
     \label{eq:local_STL_dynamics}  
     \\ 
     &\;\;\;\;\;\; \bm{z}_i^0 \models \psi_i, \; \mathrm{with}\; z_i^0(0)=x_{0,i}, \label{eq:local_STL_initial}  
\end{align}    
\end{subequations}
at $k=0$ for $i\in\N_{[1,M]}$.  
After solving problem \eqref{eq:initial_local_problem}, which is feasible by Assumption \ref{ass:global_problem_tight}, at iteration $k\geq 1$, 
only a subset of agents, denoted by $\cO_k{\subset} \cV$, are allowed to update their 
input sequences by performing an optimization. The remaining agents retrieve their 
input sequences from the previous iteration $k-1\geq 0$. Roughly, the set $\cO_k\subset \cV$ is constructed so that any combination of its elements 
does not belong to a clique $\nu\in\cK_\phi$. 
Due to space limitations, we simply construct $\cO_k$ as a singleton, 
which only affects the number of agents' trajectories that can be optimized in parallel 
per iteration. For the graph $\cG=(\cV,\cE)$, with $\cV=\N_{[1,M]}$, 
$\cO_k=\mathrm{mod}(\cO_{k-1},M)+1$, for $k>1$, with $\cO_1=1$. 
We refer readers to \cite[Sec. V.B]{vlahakisECC2023} for a more efficient construction of 
$\cO_k$ enabling parallel computations at each iteration (see Sec. \ref{sec:example} for a numerical example).

At the $k$th iteration, with $k\geq 1$, if $i\notin \cO_k$, then, $\bm{z}_i^k=\bm{z}_i^{k-1}$ and $\bm{v}_i^k=\bm{v}_i^{k-1}$. Otherwise, the input sequence of the $i$th agent is updated by solving
\begin{subequations}\label{eq:kth_local_problem}
\begin{align}
    &\operatorname*{Minim.}_{\substack{\bm{v}_i^k, \bm{z}_i^k}}\sum_{t=0}^{N-1}(\ell_i(z_i^k(t), v_i^k(t))) + V_{f,i}(z_i^k(N))-\mu_{ij_k}^k \label{eq:local_problem_kth_iter} \\
     &\mathrm{s.t.~} z_i^k(t+1) = A_iz_i^k(t)+B_iv_i^k(t), \; t\in\N_{[0,N)},\\ 
     & \;\;\;\;\;\; \bm{z}_i^k \models \psi_i, \; \mathrm{with}\; z_i(0)=x_{0,i}, \label{eq:kth_agent_i_local_task}\\  
     &\;\;\;\;\;\; \rho^{\psi_{ij_k}}(\bm{z}_{ij_k}^k) \geq \mu_{ij_k}^k, \; j_k = \argmin_{j\in \cT_i} \{\rho^{\psi_{ij}}(\bm{z}_{ij}^{k-1})\}, \label{eq:least_violating_joint_task_a}\\
     &\;\;\;\;\;\; \mu_{ij_k}^k\geq \min\left(0,\rho^{\psi_{ij_k}}(\bm{z}_{ij_k}^{k-1})\right), \label{eq:least_violating_joint_task_b}\\
     &\;\;\;\;\;\; \rho^{\psi_{ij}}(\bm{z}_{ij}^k)\geq \min\left(0,\rho^{\psi_{ij}}(\bm{z}_{ij}^{k-1})\right), \; \forall j \in \cT_i\setminus j_k, \label{eq:robust_joint_tasks}
\end{align}    
\end{subequations}
where $\rho^{\psi_{ij}}(\bm{z}_{ij}^k)$ is the robustness function of the formula $\psi_{ij}$ evaluated over the trajectory $\bm{z}_{ij}^k$. Agent-$i$, with $i\in \cO_k$, by solving \eqref{eq:kth_local_problem}, retrieves an input sequence that guarantees 1) the satisfaction of the individual task $\psi_i$ (see constraint \eqref{eq:kth_agent_i_local_task}), 2) the improvement of the most violating (or least robust) joint task $\psi_{ij_k}$ (see constraints \eqref{eq:least_violating_joint_task_a}-\eqref{eq:least_violating_joint_task_b}), and 3) either improvement on or non-violation of the remaining joint tasks (see constraint \eqref{eq:robust_joint_tasks}). The inclusion of the $\min$ operator in the constraints \eqref{eq:least_violating_joint_task_b}-\eqref{eq:robust_joint_tasks} relaxes the satisfaction of joint tasks that have already been found to be satisfiable in previous iterations. This allows the algorithm to emphasize the satisfaction of joint tasks with the smallest robustness function. 
The algorithm may terminate if it exceeds a maximum number of iterations, denoted as $k_\mathrm{max}$ and defined by the designer, yielding a minimally violating solution. Alternatively, termination occurs when verifying the satisfiability of all joint tasks, i.e., when $\mu_{ij}^k \geq 0$ for all $i \in \mathcal{V}$, $j \in \mathcal{T}_i$, and some $k\leq k_\mathrm{max}$, returning a feasible solution to \eqref{eq:multi_agent_problem_tight}. 
The overall iterative procedure is summarized in Alg. \ref{alg:control_syn}, the integrity of which relies on the following result.

\begin{theorem}
    At each iteration $k\geq 1$, the optimization problem \eqref{eq:kth_local_problem} is feasible for all $i\in \cO_k$.
\end{theorem}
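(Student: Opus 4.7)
The plan is to prove feasibility by induction on $k$, exhibiting at each iteration an explicit feasible candidate obtained by keeping agent $i$ on its previous trajectory. The crucial structural fact is that because $\cO_k$ is a singleton, the only agent whose trajectory changes at iteration $k$ is the one solving \eqref{eq:kth_local_problem}; every other agent $i'\in\cV\setminus\cO_k$ satisfies $\bm{z}_{i'}^k=\bm{z}_{i'}^{k-1}$ and $\bm{v}_{i'}^k=\bm{v}_{i'}^{k-1}$ by construction. In particular, for any clique index $j\in\cT_i$, none of the agents in $j$ updates at iteration $k$ (as $j\subseteq \cV\setminus\{i\}$), so if the candidate leaves $\bm{z}_i^k=\bm{z}_i^{k-1}$, then the aggregate trajectory satisfies $\bm{z}_{ij}^k=\bm{z}_{ij}^{k-1}$, hence $\rho^{\psi_{ij}}(\bm{z}_{ij}^k)=\rho^{\psi_{ij}}(\bm{z}_{ij}^{k-1})$.

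For the base case $k=1$, I would take the candidate $(\bm{v}_i^1,\bm{z}_i^1)=(\bm{v}_i^0,\bm{z}_i^0)$ produced by the initialization problem \eqref{eq:initial_local_problem}, which is feasible by Assumption \ref{ass:global_problem_tight} and hence by \eqref{eq:local_STL_initial} satisfies the dynamics and $\bm{z}_i^0\models \psi_i$. Picking
\[
\mu_{ij_1}^1 = \rho^{\psi_{ij_1}}(\bm{z}_{ij_1}^{0})
\]
trivially fulfils \eqref{eq:least_violating_joint_task_a}--\eqref{eq:least_violating_joint_task_b}, since the left-hand side of \eqref{eq:least_violating_joint_task_a} equals $\mu_{ij_1}^1$ by the observation above, and $\mu_{ij_1}^1\ge\min(0,\mu_{ij_1}^1)$ is immediate. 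Constraint \eqref{eq:robust_joint_tasks} reduces to $\rho^{\psi_{ij}}(\bm{z}_{ij}^{0})\ge \min(0,\rho^{\psi_{ij}}(\bm{z}_{ij}^{0}))$, which holds identically.

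For the inductive step, assume \eqref{eq:kth_local_problem} was feasible at iteration $k-1$, yielding trajectories $\bm{z}_i^{k-1},\bm{v}_i^{k-1}$ that satisfy the individual task $\psi_i$ and for which $\rho^{\psi_{ij}}(\bm{z}_{ij}^{k-1})\ge \min(0,\rho^{\psi_{ij}}(\bm{z}_{ij}^{k-2}))$ for all $j\in\cT_i$. At iteration $k$, I repeat the construction: set $(\bm{v}_i^k,\bm{z}_i^k)=(\bm{v}_i^{k-1},\bm{v}_i^{k-1})$ so that the dynamics and $\bm{z}_i^k\models\psi_i$ hold automatically, and pick $\mu_{ij_k}^k = \rho^{\psi_{ij_k}}(\bm{z}_{ij_k}^{k-1})$. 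Using the structural observation, \eqref{eq:least_violating_joint_task_a} is satisfied with equality, \eqref{eq:least_violating_joint_task_b} reads $\rho^{\psi_{ij_k}}(\bm{z}_{ij_k}^{k-1})\ge \min(0,\rho^{\psi_{ij_k}}(\bm{z}_{ij_k}^{k-1}))$ which is trivial, and \eqref{eq:robust_joint_tasks} similarly reduces to a tautology. This closes the induction.

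The argument is essentially bookkeeping; the only subtle point I anticipate is making sure that the scheduling policy $\cO_k=\{\mathrm{mod}(\cO_{k-1},M)+1\}$ really does freeze every agent in each clique $j\in\cT_i$ during iteration $k$, so that the identity $\bm{z}_{ij}^k=\bm{z}_{ij}^{k-1}$ can be invoked without ambiguity. Once this is articulated clearly, the proof is a short induction; I would not expect a genuine obstacle, but to keep notation clean I would state the persistence of $\psi_i$-satisfaction across iterations as a separate observation before entering the induction.
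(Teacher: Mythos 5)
Your proof is correct and follows essentially the same route as the paper: exhibit the previous iterate (the initialization solution at $k=1$, or $\bm{u}_i^{k-1}$ for $k>1$) as a feasible candidate, observe that the lower bounds in \eqref{eq:least_violating_joint_task_a}--\eqref{eq:robust_joint_tasks} are evaluated on the frozen $(k-1)$-trajectories so they hold with equality or trivially, and note that satisfaction of $\psi_i$ persists from the iteration at which agent $i$'s trajectory was last optimized --- exactly the paper's remark that $\bm{u}_i^{k-1}$ stems from solving \eqref{eq:kth_local_problem} or \eqref{eq:initial_local_problem} at some $\kappa\in\N_{[0,k-1]}$, which is the ``persistence'' observation you flagged. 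Apart from the harmless typo $(\bm{v}_i^k,\bm{z}_i^k)=(\bm{v}_i^{k-1},\bm{v}_i^{k-1})$ (the second entry should be $\bm{z}_i^{k-1}$), nothing needs changing.
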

\begin{proof} 
    Let $k=1$. The lower bounds in \eqref{eq:least_violating_joint_task_a}-\eqref{eq:robust_joint_tasks} are defined over trajectories, $\bm{z}_i^{0}$, $i{\in}\cO_1$, obtained by solving \eqref{eq:initial_local_problem} at $k{=}0$. Thus $\bm{z}_i^{0}$ satisfies the constraints in \eqref{eq:least_violating_joint_task_a}-\eqref{eq:robust_joint_tasks} for all $i{\in}\cO_1\subset \cV$. Moreover, the constraint in \eqref{eq:kth_agent_i_local_task} is satisfied by $\bm{z}_i^{0}$, since \eqref{eq:initial_local_problem} is feasible. Hence, $\bm{u}_i^1 = \bm{u}_i^{0}$ is a feasible solution of \eqref{eq:kth_local_problem} at iteration $k=1$. Now, let $k{>}1$. The lower bounds in \eqref{eq:least_violating_joint_task_a}-\eqref{eq:robust_joint_tasks} are defined over trajectories, $\bm{z}_i^{k-1}$, obtained by the solutions $\bm{u}_i^{k-1}$, $i{\in}\cO_k$, at iteration $k-1$. Thus, $\bm{z}_i^{k-1}$, $i{\in}\cO_k$, satisfy the constraints in  \eqref{eq:least_violating_joint_task_a}-\eqref{eq:robust_joint_tasks}. Additionally, the constraint in \eqref{eq:kth_agent_i_local_task} is satisfied by $\bm{z}_i^{k-1}$, $i\in\cO_k$, since it is retrieved by $\bm{u}_i^{k-1}$, which is obtained by solving \eqref{eq:kth_local_problem} or \eqref{eq:initial_local_problem} at some iteration $\kappa\leq k-1$. Thus, $\bm{u}_i^k {=} \bm{u}_i^{k-1}$ is a feasible solution of \eqref{eq:kth_local_problem} for all $k\geq 1$. 
\end{proof}


\begin{algorithm}
\caption{Iterative procedure for solving \eqref{eq:multi_agent_problem_tight}}
\label{alg:control_syn}
\begin{algorithmic}[1]
\State Compute $\cT_i$ \eqref{eq:Ti} and construct $\hat{\psi}_i$ \eqref{eq:psi_hat}, for $i\in\N_{[1,M]}$
\State Solve \eqref{eq:initial_local_problem} and store $\bm{v}_i^0$, $\bm{z}_i^0$, for $i\in\N_{[1,M]}$
\State Construct $\cO_k$, for $k\in\N_{[1,k_{\mathrm{max}}]}$ 
\For{$k$ in $1:k_\mathrm{max}$}
    \For{$i$ in $1:M$}
        \State \textbf{if} $i\in\cO_k$, solve \eqref{eq:kth_local_problem}, and store $(\bm{v}_i^k,\bm{z}_i^k)$  
        \State \textbf{if} $i\notin\cO_k$, update $\bm{v}_i^k\leftarrow\bm{v}_i^{k-1}$ and $\bm{z}_i^k\leftarrow\bm{z}_i^{k-1}$
    \EndFor
    \State Construct $(\bm{v}(0),\bm{z}(0))$ from $(\bm{v}_i^k,\bm{z}_i^k)$, $i\in\N_{[1,M]}$
    \State \textbf{if} $\rho^\psi(\bm{z}(0))\geq 0$ \textbf{go to} \ref{algo:return}
\EndFor
\State \textbf{return} $(\bm{v}(0),\bm{z}(0))$ \label{algo:return}
\end{algorithmic}
\end{algorithm}

\section{Example}\label{sec:example}


We consider ten agents with aggregate dynamics given by $x(t+1)=x(t)+u(t)+w(t)$, where $x(t)=(x_1(t),...,x_{10}(t))\in\R^{20}$, $u(t)=(u_1(t),...,u_{10}(t))\in\R^{20}$, and $w(t)=(w_1(t),...,w_{10}(t))\in\R^{20}$. States $x_i(t)\in\cX$, where $\cX$ is the workspace confined by the dashed border in Fig. \ref{fig:trajectories}. Individual inputs are constrained by $\|u_i(t)\|_\infty\leq 0.8$, and disturbances, $w_i(t)$, are Gaussian random vectors, independent time- and agent-wise, with zero mean and covariance, $Q_i=0.05I_2$, for all $t\in\N$ and $i\in\N_{[1,10]}$. The MAS is assigned a specification $\phi=\bigwedge_{i=1}^{10}\phi_i\wedge \bigwedge_{\nu\in\cK_\phi}\phi_\nu$, with horizon $N=100$, where $\cK_\phi$ is the set of cliques shown in Fig. \ref{fig:graph}, and $\phi_i$, $\phi_\nu$, are tasks assigned to agent $i$, and the agents in $\nu\in\cK_\phi$, respectively. In \eqref{eq:multi_agent_problem}, we select $\ell_i(x_i(t),u_i(t))=\|u_i(t)\|_1$, $V_{f,i}(x_i(100))=0$, for all $i\in\N_{[1,10]}$, and set $\theta=0.70$. 

Let $\phi_i= \left(\square_{[0,100]}(\varphi_i^\cX\wedge \neg \varphi_i^{O_1}\wedge\neg \varphi_i^{O_2}\wedge\neg \varphi_i^{O_3})\right) \wedge \left(\lozenge_{[10,50]} \varphi_i^{T_i}\right) \wedge\left(\lozenge_{[70,100]}\varphi_i^{G_i}\right)$ be an individual task, which requires agent-$i$, starting from $x_i(0)$ to pass through $T_i$ and $G_i$ within the intervals $\N_{[10,50]}$ and $\N_{[70,100]}$, respectively, while always staying within $\cX$ and avoiding $O_1$, $O_2$, $O_3$. Regions $T_i$, $G_i$, $i\in\N_{[1,10]}$, and obstacles $O_1$, $O_2$, $O_3$, are in Fig. \ref{fig:trajectories}. Note that $x_i(t) \models \varphi^\cY_i$ if $x_i(t)\in \cY$, $\cY=\{\cX,O_1,O_2,O_3,T_1,\ldots,T_{10},G_1,\ldots,G_{10}\}$, for $t\in\N_{[0,100]}$.


Let $\phi_\nu=\lozenge_{[0,100]}\left(\|C_\nu x_\nu(t)\|_\infty\leq 1\right)$, where $C_\nu={\tiny\begin{bmatrix}
    I&-I
\end{bmatrix}}$ if $|\nu|=2$ or $C_\nu={\tiny\begin{bmatrix}
    I& -I& 0\\0 & I& -I\\I& 0&-I
\end{bmatrix}}$ if $|\nu|=3$, be a joint task requiring agents in $\nu\in\cK_\phi$ (see Fig. \ref{fig:graph}) to approach one another at least once within the horizon. 

\begin{figure}[t!]
\centering
\begin{tikzpicture}[auto, node distance=.75cm, every loop/.style={},thick,main node/.style={circle,draw}]
  \node[main node] (2) at (1,1) {2};
  \node[main node,right=of 2] (3) {3};
  \node[main node,right=of 3] (4) {4};
  \node[main node,right=of 4] (7) {7};
  \node[main node,right=of 7] (8) {8};
  \node[main node,right=of 8] (10) {10};
  \node[main node,below=of 3] (1) {1};
  \node[main node,right=of 1] (5) {5};
  \node[main node,right=of 5] (6) {6};
  \node[main node,right=of 6] (9) {9};
    \draw[red,line width=2pt](1)--(2);
    \node[red,right=of 2, xshift=-.55cm,yshift=-.45cm] (123) {$\phi_{123}$};
    \draw[red,line width=2pt](2)--(3);
    \draw[red,line width=2pt](1)--(3);
    \draw[orange,line width=2pt](3)--(4);
    \node[orange,right=of 3, xshift=-.75cm,yshift=.25cm] (34) {$\phi_{34}$};
    \draw[cyan,line width=2pt](1)--(5);
    \node[cyan,below=of 34, xshift=-.cm,yshift=-.08cm] (15) {$\phi_{15}$};
    \draw[magenta,line width=2pt](5)--(6);
    \node[magenta,right=of 15, xshift=-.1cm,yshift=.cm] (56) {$\phi_{56}$};
    \draw[brown,line width=2pt](4)--(7);
    \node[brown,right=of 34, xshift=-.1cm,yshift=.cm] (47) {$\phi_{47}$};
    \draw[black,line width=2pt](7)--(8);
    \node[black,right=of 47, xshift=-.1cm,yshift=.cm] (78) {$\phi_{78}$};
    \draw[gray,line width=2pt](6)--(8);
    \node[gray,right=of 123, xshift=2.1cm,yshift=-.1cm] (68) {\rotatebox{45}{$\phi_{68}$}};
    \draw[blue,line width=2pt](6)--(9);
    \node[blue,right=of 56, xshift=-.1cm,yshift=.cm] (69) {$\phi_{69}$};
    \draw[green,line width=2pt](8)--(10);
    \node[green,right=of 78, xshift=-.27cm,yshift=.cm] (810) {$\phi_{8\;10}$};
    \draw[yellow,line width=2pt](9)--(10);
    \node[yellow,right=of 68, xshift=-.2cm,yshift=-.cm] (910) {\rotatebox{45}{$\phi_{9\;10}$}};
     \draw[purple,line width=2pt](4)--(5);
     \node[purple,left=of 68, xshift=-.2cm,yshift=-.1cm] (45) {\rotatebox{270}{$\phi_{45}$}};
    
\end{tikzpicture}
\caption{The ten agents of the MAS and the cliques $\cK_\phi$ in $\phi$.}
\label{fig:graph}
\end{figure}
To formulate the deterministic problem \eqref{eq:multi_agent_problem_tight}, we proceed as follows: First, we select closed-loop matrices $\bar{A}_i=I_2+K_i$, with $K_i=-0.5I_2$, $i\in\N_{[1,10]}$, and construct $t$-PRSs, $E_i(t)$, $t\in\N_{[0,100]}$, for \eqref{eq:ith_error_dyn}, by the recursion $E_i(t+1)=\bar{A}_iE_i(t)\oplus \mathscr{E}_{\theta_i}(\cD_{w_i})$, with $E_i(0)=\{0\}$, at probability levels $\theta_i=1-\frac{1-0.7^{\frac{1}{10}}}{100}=0.9996$, $i\in\N_{[1,10]}$, such that $\Pi_{i=1}^{10}\left(1-100(1-\theta_i)\right)\geq 0.7$, and $\mathscr{E}_{\theta_i}(\cD_{w_i})=\{w_i|w_i^\intercal  Q_i^{-1}w_i\leq \chi_2^2(\theta_i)\}$, where $\chi_2^2$ is the chi-squared distribution of degree $2$. Last, given that $\bm{E}=E(0)\times \cdots \times E(100)$, with $E(t)=E_1(t)\times \cdots \times E_{10}(t)$, $t\in\N_{[0,100]}$, is a PRT for \eqref{eq:MAS_error_dyn} at probability level $\Theta=0.7$, by Thm. \ref{thm:tubes}, we perform the optimizations in \eqref{eq:all_predicates}, derive the \textit{tighter} formula $\psi$ as in \eqref{eq:global_psi}, and formulate \eqref{eq:multi_agent_problem_tight} as an MILP. 
We have attempted to solve \eqref{eq:multi_agent_problem_tight} in a centralized manner using the GUROBI solver \cite{gurobi}, which produces a solution after running for 2.5 hours, with its feasibility iteration limit set to a maximum of ten. To obtain a solution faster, we first decompose $\psi$ according to \eqref{eq:psi_hat}, based on the sets $\cT_1=\{(2,3),5\}$, $\cT_2=\{(1,3)\}$, $\cT_3=\{(1,2),4\}$, $\cT_4=\{3,5\}$, $\cT_5=\{1,4,6\}$, $\cT_6=\{5,8,9\}$, $\cT_7=\{4,8\}$, $\cT_8=\{6,7,10\}$, $\cT_9=\{6,10\}$, and $\cT_{10}=\{8,9\}$. By selecting sets $\cO_k$, $k\geq 1$, as $\cO_1=\{1,4,6,10\}$, $\cO_2=\{3,5,7,9\}$, $\cO_3=\{8,9,2,5\}$, $\cO_4=\cO_1$, $\cO_5=\cO_2$, $\cO_6=\cO_3$, $\cO_7=\cO_1$, and so on, we run Alg. \ref{alg:control_syn}, which terminates in less than six minutes returning a multi-agent trajectory, illustrated in Fig. \ref{fig:trajectories}, that satisfies the global STL task $\psi$. Fig. \ref{fig:comparison} shows the computational overhead of Alg. \ref{alg:control_syn} vs. the centralized solution for varying agent numbers, using a log scale to highlight the different runtime magnitudes. Note that the runtime of Alg. \ref{alg:control_syn} can further be improved if agent-level subproblems, \eqref{eq:initial_local_problem}, \eqref{eq:kth_local_problem}, can be solved in parallel. By evaluating the robustness function of $\phi$ for numerous noisy realizations, we see that $\phi$ is violated in less than $30\%$ of the time, verifying Thm. \ref{thm:deterministic_problem}. 

\begin{figure}[htbp]
    \centering
    \input{figures/10agents_trajectories}
    \caption{Nominal trajectories (solid lines) by Alg. \ref{alg:control_syn}, with initial states marked by crosses. Tubes (transparent covers) around trajectories, at probability levels $\Theta_i = 0.965$.} 
    \label{fig:trajectories}
\end{figure}

\begin{figure}[htbp]
    \centering
    \begin{tikzpicture}
    \begin{axis}[
        xlabel={Number of Agents},
        ylabel={Runtime (s)},
        xtick={1,2,3,4,5,6,7,8,9,10},
        ymode=log, 
        log basis y={10}, 
        legend pos=north west,
        grid=both,
        grid style={line width=.1pt, draw=gray!10},
        major grid style={line width=.2pt,draw=gray!50},
        width=3.45in,
        height=1.6in
    ]
    
    \addplot[
        color=blue,
        mark=square*,
        thick
    ] coordinates {
        (1, 4.8)
        (2, 12.08)
        (3, 22.32)
        (4, 36.88)
        (5, 39.18)
        (6, 45.67)
        (7, 58.73)
        (8, 98.35)
        (9, 143.51)
        (10, 321.81)
    };
    \addlegendentry{Algorithm 1}
    
    \addplot[
        color=red,
        mark=*,
        thick
    ] coordinates {
        (1, 4.8)
        (2, 57.07)
        (3, 93.6972)
        (4, 173.91)
        (5, 491.3345)
        (6, 810.1268)
        (7, 2.0811e+03)
        (8, 3.2577e+03)
        (9, 4.3629e+03)
        (10, 9832.15)
    };
    \addlegendentry{Centralized}

    \end{axis}
\end{tikzpicture}
    \caption{Compute times (log scale) for solving \eqref{eq:multi_agent_problem_tight} via Alg. \ref{alg:control_syn} and a centralized approach for varying agent numbers.}
    \label{fig:comparison}
\end{figure}

\section{Conclusion}\label{sec:concl}

We have considered stochastic linear multi-agent systems under STL specifications formulated probabilistically. Leveraging linearity, 
we construct a PRT at the specification probability level 
and relax the underlying stochastic control problem with a deterministic one with tighter constraints. Our PRT-based tightening reduces conservatism compared to approaches relying on the STL specification structure. To enhance scalability, we propose an algorithm, where the multi-agent problem is decomposed into agent-level subproblems that can be solved iteratively. Although our method fits large-scale MAS settings, the conservatism introduced by the construction of PRTs increases with the specification horizon. Future work will address this via efficient, data-driven approaches, avoiding union-bound arguments.

\bibliographystyle{IEEEtran}

\bibliography{biblio}

\end{document}